\pgfplotsset{major grid style={solid,white!90!black}}
\pgfplotsset{every tick label/.append style={font=\scriptsize}}
\pgfplotsset{tickwidth=0.1cm}
\let\originalleft\left
\let\originalright\right
\renewcommand{\left}{\mathopen{}\mathclose\bgroup\originalleft}
\renewcommand{\right}{\aftergroup\egroup\originalright}
\definecolor{mycolor1}{rgb}{0.00000,0.44700,0.74100}%
\definecolor{mycolor2}{rgb}{0.85000,0.32500,0.09800}%
\definecolor{mycolor3}{rgb}{0.92900,0.69400,0.12500}%
\definecolor{mycolor4}{rgb}{0.49400,0.18400,0.55600}%
\definecolor{mycolor5}{rgb}{0.46600,0.67400,0.18800}%
\definecolor{mycolor6}{rgb}{0.30100,0.74500,0.93300}%
\def\Nch{N}
\def\rpnvar{\sigma^2_\mathrm{p}}
\renewcommand\vec{\underline}
\def\rnd{\bm}
\newcommand{\rndvec}[1]{\rnd{\vec{#1}}}
\def\rndmtx{\mathbf}
\newtheorem{prop}{Proposition}
\DeclareMathOperator*{\argmax}{argmax}
\DeclareMathOperator*{\argmin}{argmin}
\DeclareMathOperator*{\diag}{diag}
\algnewcommand\algorithmicforeach{\textbf{for each}}
\begin{document}

\title{Optimization~of~Transmitter-Side~Signal\\Rotations~in~the~Presence~of~Laser~Phase~Noise}

\author{
	Arni~F.~Alfredsson,~\IEEEmembership{Student~Member,~IEEE},~Erik~Agrell,~\IEEEmembership{Fellow,~IEEE},\\Magnus~Karlsson,~\IEEEmembership{Senior~Member,~IEEE,~Fellow,~OSA},~and~Henk~Wymeersch,~\IEEEmembership{Member,~IEEE}%
\thanks{
A. F. Alfredsson, E. Agrell, and H. Wymeersch are with the Department of Electrical Engineering, Chalmers University of Technology, SE-41296 G\"{o}teborg, Sweden (e-mail: arnia@chalmers.se).

M. Karlsson is with the Photonics Laboratory, Department of Microtechnology and Nanoscience, Chalmers University of Technology, SE-41296 G\"{o}teborg, Sweden.

This work was supported by the Swedish Research Council (VR) via Grants 2014-6138 and 2018-03701 and the Knut and Alice Wallenberg Foundation via Grant 2018.0090.
}}
\maketitle

\begin{abstract}
	\boldmath
	The effects of transmitter-side multidimensional signal rotations on the performance of multichannel optical transmission are studied in the presence of laser phase noise.
	In particular, the laser phase noise is assumed to be uncorrelated between channels.
	To carry out this study, a simple multichannel laser-phase-noise model that has been experimentally validated for weakly-coupled multicore-fiber transmission is considered.
	As the considered rotation scheme is intended to work in conjunction with receiver-side carrier phase estimation (CPE), the model is modified to further assume that imperfect CPE has taken place, leaving residual phase noise in the processed signal.
	Based on this model, two receiver structures are derived and used to numerically optimize transmitter-side signal rotations through Monte Carlo simulations.
	For reasonable amounts of residual phase noise, rotations based on Hadamard matrices are found to be near-optimal for transmission of four-dimensional signals.
	Furthermore, Hadamard rotations can be performed for any dimension that is a power of two.
	By exploiting this property, an increase of up to 0.25 bit per complex symbol in an achievable information rate is observed for transmission of higher-order constellations.
\end{abstract}

\begin{IEEEkeywords}
	Multichannel, optical communications, phase noise, rotation, signal processing
\end{IEEEkeywords}

\section{Introduction}
\label{sec:intro}

One of the main limiting factors in fiber-optic systems is phase noise stemming from laser imperfections and the optical Kerr effect in fibers \cite{Ip:08}. Laser phase noise is typically compensated for using receiver-based digital signal processing (DSP) methods such as the blind phase search algorithm \cite{4814758}. Pilot-aided algorithms are also becoming increasingly utilized due to their robustness at low signal-to-noise ratios (SNRs) and transparency to modulation formats \cite{Mazur:19}. Furthermore, specialized phase-noise mitigation schemes such as self-homodyne detection have been demonstrated in space-division multiplexed (SDM) and wavelength-division multiplexed (WDM) transmission \cite{mazur_jlt18,Puttnam:13}. For the compensation of nonlinear phase noise, studies on DSP-based solutions that have proved beneficial include digital backpropagation \cite{1576188,4738549} and Kalman equalization \cite{6888491}.

All practical methods for phase-noise compensation are imperfect in the sense that they do not completely mitigate the laser and nonlinear phase noise. As a result, the residual phase noise may still impair system performance. This impact may be reduced in various ways \cite{6510018,6919303}, in particular by applying DSP on the transmitter side \cite{7478061,5979171}.

Transmitter-side DSP has been extensively studied in wireless communications and has led to, e.g., space--time codes \cite{771146} and channel-aware precoding schemes \cite{1207369} for multiple-input multiple-output (MIMO) systems.
More recently in the context of fiber-optic transmission, schemes involving nonlinearity precompensation \cite{1576188,7478061}, preequalization \cite{1396065,Zhang:14:OE}, and precoding \cite{Rath:17} have been proposed. Space--time coding and power-allocation schemes have also been used to maximize power efficiency through waterfilling solutions \cite{Barros:10}, and to mitigate polarization-dependent loss (PDL) \cite{Awwad:13,Zhu:15,8696686} and mode-dependent loss (MDL) \cite{7307951,Amhoud:15}.

A potential drawback of some transmitter-based schemes is that they often require the knowledge of instantaneous channel-state information (CSI), which is fed back from the receiver to the transmitter. In fiber-optic systems, the channel tends to change faster than the round-trip delay, thus precluding the use of reliable instantaneous CSI at the transmitter \cite{Amhoud:15}. Methods that only use statistical or no CSI are therefore of practical interest.

A class of methods that do not require CSI involves the rotation of multidimensional signals. Rotations have been used in single-antenna wireless communications to improve diversity performance for transmission through a fading channel \cite{681321}. In MIMO transmission, rotation-based space--time codes have been proposed \cite{985979}. CSI-independent rotation schemes have also been considered in fiber-optic communications \cite{6718045}. In particular, rotations based on Hadamard matrices have been used to mitigate PDL \cite{6647786,Awwad:13,8696686}, MDL \cite{7537474}, and channel-dependent loss due to imperfect filtering in WDM transmission \cite{Shibahara:15}. However, this concept has not been explored for fiber-optic transmission impaired by laser phase noise.

Inspired by the presence of residual phase noise due to nonideal DSP, we pose the following questions: Do transmitter-side multidimensional signal rotations change the impact of residual phase noise on the transmission performance, and if so, which rotations are effective?
To address these questions, we study the effect of transmitter-side signal rotations on the performance of multichannel systems in the presence of phase noise.
We consider scenarios where the Kerr nonlinearity is sufficiently weak so that the laser phase noise will be the dominant source of the residual phase noise after compensation.
Moreover, as it is already well known that correlated laser phase noise can be exploited to improve performance, we focus on the case where the laser phase noise is uncorrelated across the channels.

A simple multichannel laser-phase-noise model that has been experimentally validated for transmission through a weakly-coupled, homogeneous, single-mode multicore fiber (MCF) is considered. This system model is modified to further assume the use of nonideal receiver-side carrier-phase estimation (CPE), leaving residual phase noise. Based on the modified model, two receiver structures are derived that perform joint-channel and per-channel symbol detection, thus offering different complexity and performance. Transmission of rotated constellations is considered, and this rotation is numerically optimized for the derived receivers with respect to different performance metrics for a number of model parameters.

The contributions of the paper comprise the following results:
1) On one hand, signal rotations can improve performance in the presence of moderate laser phase noise, even when the phase-noise statistics are not exploited. On the other hand, rotations are not beneficial in the case of extreme amounts of laser phase noise;
2) Consider two-channel transmission. Rotations based on Hadamard matrices are near-optimal in terms of block error rate (BLER), bit error rate (BER), and achievable information rate (AIR) for moderate amounts of laser phase noise;
3) Assume transmission of Hadamard-rotated constellations over a large number of complex channels. Derotating the received signal before data detection causes the residual phase noise to manifest as attenuation and additive noise. This can improve an achievable information rate by up to 0.25 bit per complex symbol for higher-order modulations.

\textit{Notation:} Vectors are denoted by underlined letters $\vec x$, whereas matrices are expressed by uppercase sans-serif letters $\mathsf X$, and $\mathsf X_N$ is used when the size of a square matrix needs to be stated explicitly. Sets are indicated by calligraphic letters $\mathcal X$. Boldface letters denote random quantities. The real line and complex plane are represented by $\mathbb R$ and $\mathbb C$, the real and imaginary parts of a complex number are denoted by $\Re$ and $\Im$, and $j$ represents the imaginary unit. The probability mass function (PMF) of a discrete random variable $\rnd x$ at $x$ is written as $P_{\rnd{x}}(x)$, and the probability density function (PDF) of a continuous random variable $\rnd x$ at $x$ is denoted by $p_{\rnd{x}}(x)$. The probability distribution of a mixed discrete--continuous random variable is expressed in the same way as PDFs. The expectation of a random variable is indicated by $\mathbb{E}[\cdot]$.

\section{System Model}
\label{sec:sysmodel}

The considered system model is based on a simple multichannel laser-phase-noise model proposed in \cite{8576586}, which was used to assess potential performance gains that come from estimating laser phase noise jointly over multiple channels. The model-based simulation results were found to be in agreement with experimental results for transmission through a weakly-coupled, single-mode, homogeneous MCF. The system model in \cite{8576586} entails uncoded dual-polarization transmission over multiple channels in the presence of laser phase noise, which is arbitrarily correlated among the channels. The data symbols are modeled as uniform random variables that take on values from a set $\mathcal X$ of complex constellation points, and normalized such that the average symbol energy is $E_\mathrm{s}$ per channel. The inclusion of pilots within the transmitted symbol block is also assumed, whose values and positions are known to the transmitter and receiver. Moreover, the fiber Kerr nonlinearity is neglected and the received signal is assumed to have undergone ideal DSP such that all impairments except for laser phase noise have been fully mitigated. 

In this work, we use the same assumptions as the model in \cite{8576586} with an additional constraint: The laser phase noise is assumed to be uncorrelated between channels. This is a reasonable assumption when independent lasers are used for each channel and in certain cases when lasers are shared among channels\footnote{In SDM MCF transmission, intercore skew causes the signals to reach the receiver at different times, which decorrelates the phase noise across the channels \cite{7328945}. In WDM frequency-comb transmission, chromatic dispersion has an analogous effect \cite{7390165}.}. Furthermore, the transmission of arbitrarily rotated multidimensional constellations over multiple channels (wavelength or space) is considered. The resulting complex constellations that are transmitted in each channel can be regarded as two-dimensional (2D) projections of the rotated multidimensional constellation. For an arbitrary rotation, the complex constellations will lack structure, which can complicate or prevent the implementation of blind DSP methods. Therefore, all receiver DSP blocks are assumed to be implemented in a pilot-aided manner, operating independently of the data modulation, as experimentally demonstrated in \cite{Mazur:19}.

Nonideal pilot-aided CPE is assumed to have been performed in the receiver, leaving residual phase noise in the processed signal. In general, the residual phase noise is not perfectly memoryless over time since nonideal algorithms are unable to completely remove the memory from the laser phase noise. Moreover, the residual phase noise is nonstationary since the variance of each sample depends on the temporal distance to its nearest pilot, albeit the mean of all samples is zero. This is exemplified in Fig.~\ref{fig:PEdist},
\begin{figure}[!t]
	\centering
	\includegraphics{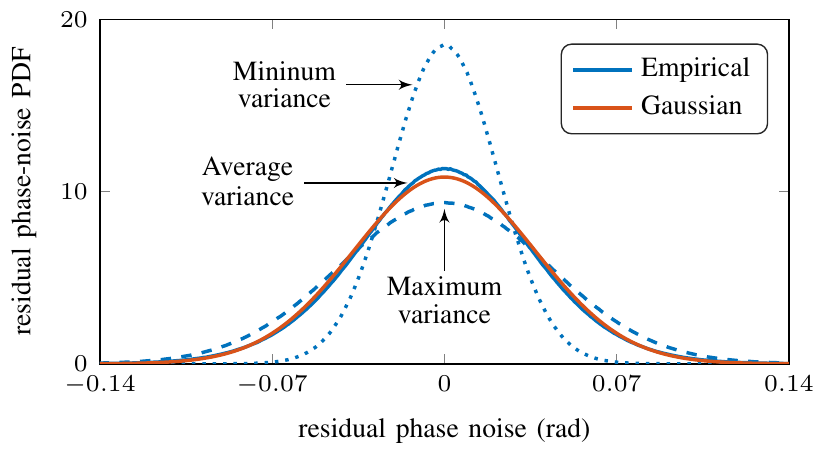}
	\vspace{-0.1cm}
	\caption{Empirical distribution and fitted Gaussian PDF of the residual phase noise for transmission at 20 GBaud with a combined laser linewidth of 200 kHz at an SNR of 30 dB.}
	\label{fig:PEdist}
\end{figure}
where the empirical distribution is plotted for samples of the residual phase noise that have the minimum and maximum variance. In addition, the average distribution of the residual phase noise across the whole time span of the transmitted frame is also plotted. The curves are obtained for a typical set of system parameters when the well-known iterative CPE algorithm proposed by Colavolpe \textit{et al.} in \cite[Sec.~IV-B]{1504908} is run for a single iteration with 1\% pilot rate\footnote{In coded transmission, the algorithm also uses soft estimates of the data symbols through iterations with the decoder. During the initial iteration, however, it operates in pilot-aided mode where it does not make use of the data symbols.}.

The difference in the variance of residual phase-noise samples is highly dependent on the SNR, the laser linewidth, the baud rate, and the positions of pilot symbols. In particular, the difference becomes negligible at lower SNRs, assuming practical laser linewidths and baud rates. Hence, for simplicity, the residual phase noise in each channel is assumed to be a memoryless and stationary zero-mean process with variance corresponding to the average distribution shown in Fig.~\ref{fig:PEdist}.
\begin{figure}[!t]
	\centering
	\includegraphics{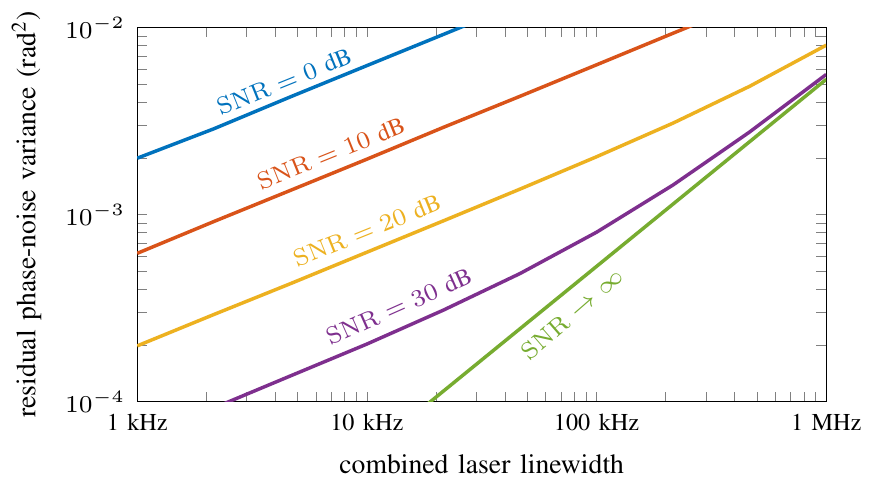}
	\vspace{-0.5cm}
	\caption{The estimated variance of the residual phase noise versus the combined laser linewidth of the system.}
	\label{fig:typicalRPNvar}
\end{figure}
This distribution can be approximated as a zero-mean Gaussian PDF, whose variance $\rpnvar$ is estimated using the residual phase noise across the whole transmitted frame. Assuming residual phase noise to be either Tikhonov or Gaussian distributed is an established approach in wireless communications \cite{6510018}.

The residual phase noise is assumed to be uncorrelated across channels. This is a valid assumption when the laser phase noise is uncorrelated across channels. However, as the two polarizations of optical fields are generated by a single laser, the residual phase noise will in general be highly correlated among the polarizations of each channel.
Hence, we consider a scheme where a multidimensional signal rotation is applied to the x-polarizations in all channels, and a separate but identical rotation is applied to all y-polarizations. This scheme enables us to assume the residual phase noise to be uncorrelated in all complex dimensions in which the signal rotation is performed.

Fig.~\ref{fig:typicalRPNvar} shows the estimated variance of the residual phase noise as a function of the combined light-source and local oscillator laser linewidth for 20 GBaud transmission at different SNRs. For SNRs larger than 10 dB, typical variance values are in the range $[10^{-4},10^{-2}]$ rad\textsuperscript2. As a reference, an SNR of 10 dB yields a BER of roughly $6\cdot10^{-2}$ for Gray-coded 16QAM transmission over the complex additive white Gaussian noise (AWGN) channel.

Given the above assumptions, the system model thus entails the propagation of a multidimensional constellation through $\Nch$ channels, in addition to fixed receiver DSP. This leaves residual phase noise which is memoryless and stationary, as well as uncorrelated across the channels. The signals are considered to be impacted identically in all channels. Assuming one complex sample per symbol, the resulting processed signal corresponding to a transmitted vector of signals $\tilde{\rndvec{s}}\in\mathbb C^\Nch$ across all channels is $\rndvec{r}=\rndmtx\Theta\tilde{\rndvec{s}}+\rndvec{n}$, where $\rndmtx\Theta=\diag([e^{j\rnd\theta_1},\dots,e^{j\rnd\theta_\Nch}])$ is a diagonal matrix.
Denoting the vector transpose by $(\cdot)^T$, the residual phase noise is modelled as a Gaussian random vector, $\rnd{\vec\theta}=[\rnd\theta_1,\dots,\rnd\theta_\Nch]^T$, whose elements are assumed to be jointly Gaussian distributed. Due to the assumption of uncorrelated residual phase noise, $\rndvec{\theta}$
contains $\Nch$ independent and identically distributed (i.i.d.) zero-mean Gaussian random variables with variance $\rpnvar$. Finally, amplified spontaneous emission is accounted for and modelled as complex AWGN $\rndvec{n}=[\rnd n_{1},\dots,\rnd n_{\Nch}]^T$, comprising $\Nch$ i.i.d. circularly symmetric complex Gaussian random variables with variance $N_0$.

The transmitter optimization involves finding an effective rotation that maps the data symbols $\rndvec{s}\in\mathcal X^\Nch$ into transmitted signals $\rndvec{\tilde s}\in\mathcal C^\Nch$, where the multidimensional rotation is performed on a real-component basis. Such a rotation does not affect the performance over the AWGN channel, but as we will show, it can be beneficial for channels with phase noise.
Let $g([z_1,\dots,z_\Nch]^T)=[\Re\{z_1\},\Im\{z_1\},\dots,\Re\{z_\Nch\},\Im\{z_\Nch\}]^T$ and $g^{-1}([x_1,\dots,x_{2\Nch}]^T)=[x_1+jx_2,\dots,x_{2\Nch-1}+jx_{2\Nch}]^T$ for $z_1,\dots,z_\Nch\in\mathbb C$ and $x_1,\dots,x_{2\Nch}\in\mathbb R$. Then, $\rndvec{\tilde{s}}=f_\mathsf{R}(\rndvec{s})=g^{-1}(\mathsf Rg(\rndvec{s}))$, and the received and processed signal can be described as
\begin{equation}
	\rndvec{r}=\rndmtx\Theta f_{\mathsf R}(\rndvec{s})+\rndvec{n},
	\label{eq:sysmodel}
\end{equation}
for $\mathsf R\in\mathcal O^+$, where $\mathcal O^+$ denotes the set of orthogonal matrices with determinant $+1$, i.e., rotation matrices. Note that since $\mathsf R^T \mathsf R=\mathsf I$, $f_{\mathsf R}^{-1}=f_{\mathsf R^T}$ because $f_{\mathsf R^T}(f_{\mathsf R}(\rndvec{s}))=f_{\mathsf R^T\mathsf R}(\rndvec{s})=\rndvec{s}$. Furthermore, that the components of $\rndvec{\tilde s}$ are statistically dependent for a general $\mathsf R$, even though the components of $\rndvec{s}$ are assumed to be independent of each other.

\section{Proposed Receivers}
\label{sec:resLPNhad}

In this section, two receivers are considered for the model in \eqref{eq:sysmodel}. The first receiver performs joint-channel symbol detection using knowledge about the phase noise distribution, $p_{\rndvec{\theta}}(\vec\theta)$, and has high performance for the considered system model. The second receiver operates without the knowledge of $p_{\rndvec{\theta}}(\vec\theta)$, and hence, this receiver performs per-channel detection and has lower complexity compared to the first one, at the cost of detection performance.

\subsection{Joint-Channel Receiver Exploiting Phase-Noise Statistics}
\label{sec:jc_rec}

In order to derive a high-performance receiver that depends on the knowledge of $p_{\rndvec{\theta}}(\vec\theta)$, where $\vec\theta=[\theta_1,\dots,\theta_\Nch]^T$, it is suitable to use a strategy that minimizes the resulting BLER, namely the maximum \textit{a posteriori} (MAP) detector. Although optimality is guaranteed only with respect to BLER, this strategy is also effective in terms of other performance metrics. It operates according to
\begin{equation}
	\hat{\vec s}=\argmax_{\vec s\in\mathcal X^\Nch} P_{\rndvec{\tilde s}|\rndvec{r}}(\vec{\tilde s}=f_{\mathsf R}(\vec s)|\vec r),
	\label{eq:map}
\end{equation}
where $\mathcal X^{\Nch}$ is the $\Nch$-ary Cartesian power of $\mathcal X$, and $P_{\rndvec{\tilde s}|\rndvec{r}}(\vec{\tilde s}|\vec r)$ is the \textit{a posteriori} PMF of $\rndvec{\tilde s}$ at $\vec{\tilde s}$ given $\rndvec{r}=\vec r$. Finding the argument that maximizes this PMF leads to a MAP estimate of $\vec{s}$ since $f_{\mathsf R}$ is bijective.
It is shown in Appendix \ref{app:jointrec} that this detection strategy can be carried out approximately as
\begin{align}
	\hat{\vec s}&=\argmax_{\vec s\in\mathcal X^\Nch}\sum_{i=1}^\Nch\left[|\eta_i|-\frac{|\tilde s_{i}|^2}{N_0}-\frac12\log_e|\eta_i|\right],
	\label{eq:log_approx_map}
\end{align}
where
\begin{equation}
	\eta_i\triangleq\frac{2r_{i}\tilde{s}^*_{i}}{N_0}+\frac{1}{\rpnvar}.
	\label{eq:eta_i}
\end{equation}
This receiver exhibits an exponential increase in complexity with the number of channels. This is owing to the joint-channel detection of $\vec s$, entailing a maximization over a set of $|\mathcal X|^\Nch$ constellation points, where $|\mathcal X|$ is the number of constellation points in each channel.

\subsection{Per-Channel Receiver Neglecting Phase-Noise Statistics}
\label{sec:pc_rec}
A suboptimal detection strategy that does not exploit $p_{\rndvec{\theta}}(\vec\theta)$ and scales linearly in complexity with the number of channels can be formulated as follows. Assuming $\rpnvar=0$ yields $\rnd\theta_0=\dots=\rnd\theta_\Nch=0$, which further leads to $\rndmtx\Theta=\mathsf I$, where $\mathsf I$ is the identity matrix, and the received signal can thus be expressed as $\rndvec{r}=f_{\mathsf R}(\rndvec{s})+\rndvec{n}$, i.e., the model reduces to transmission of a rotated signal over a multidimensional complex AWGN channel. Rotating the received signal using $\mathsf R^T$ gives
\begin{equation}
	\rndvec{\tilde r}=f_{\mathsf R^T}(\rndvec{r})=\rndvec{s}+f_{\mathsf R^T}(\rndvec{n}),
	\label{eq:derot}
\end{equation}
which does not change the statistics of $\rndvec{n}$ since the noise power in all channels is assumed to be the same. Finally, since the components of $\rndvec{s}$ are independent of each other, symbol detection can be trivially performed on a per-channel basis as
\begin{equation}
	\hat s_{i}=\argmax_{s_{i}\in\mathcal X}P_{\rnd s_i|\rnd{\tilde r_i}}(s_{i}|\tilde r_{i})=\argmin_{s_{i}\in\mathcal X}|s_{i}-\tilde r_{i}|.
\end{equation}
Hence, given the assumption of $\rpnvar=0$, the receiver structure reduces to a derotation of the received signal and Euclidean-distance minimization on a per-channel basis. A high-level overview of the considered system model, transmitter optimization, and proposed receivers is depicted in Fig.~\ref{fig:sys_model}.
\begin{figure}
	\centering
	\includegraphics{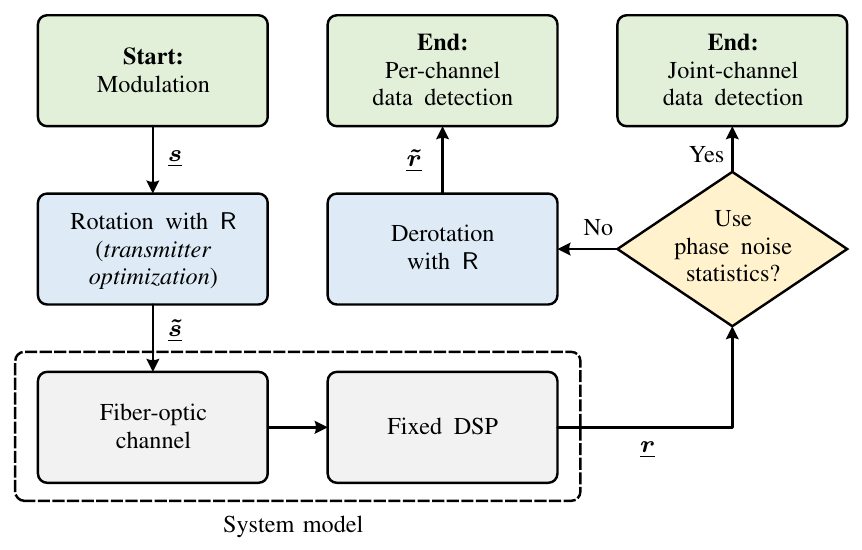}
	\vspace{-0.5cm}
	\caption{A visualization of the considered system model, as well as the strategies for transmitter optimization and data detection.}
	\label{fig:sys_model}
\end{figure}

\section{Rotation-Optimization Results}
\label{sec:rot_opt_results}
This section presents results for the optimization of signal rotations in four dimensions, corresponding to single-polarization transmission of a complex signal through two channels.
Before describing the optimization procedure, we provide a brief overview of Hadamard rotations, which are found to be near-optimal in terms of BLER, BER, and AIR for moderate amounts of residual phase noise.

\subsection{Hadamard Rotations}

Hadamard matrices can be constructed recursively and thus exist for any order $2^\ell$, where $\ell$ is a nonnegative integer\footnote{Various Hadamard matrices whose orders are multiples of 4 can also be constructed, but only orders $2^\ell$ will be considered in this paper.}. Based on this, Hadamard rotations can also be constructed as
\begin{equation}
	\mathsf H_1=1,~
	\mathsf H_2=\frac{1}{\sqrt{2}}
	\begin{bmatrix}
		1&1\\
		-1&1
	\end{bmatrix},~
	\mathsf H_{2^\ell}=\mathsf H_{2}\otimes \mathsf H_{2^{\ell-1}},
	\label{eq:hadamard_construction}
\end{equation}
where $\otimes$ is the Kronecker product and the factor $1/\sqrt{2}$ in the definition of $\mathsf H_2$ ensures that $\mathsf H_{2^\ell}$ is orthogonal for all $\ell$.
Note that the definition of $\mathsf H_2$ is unorthodox in the sense that the columns are swapped with respect to the conventional Walsh--Hadamard matrices.

Since Hadamard matrices consist of only $\pm1$ elements, they are appealing from an implementation standpoint. The multiplication with Hadamard matrices of order $2^\ell$ can be carried out analogously to the fast Fourier transform (FFT), using $2^\ell\log_22^\ell=2^\ell\ell$ additions or subtractions \cite[Ch.~6]{ahmed:orthtr}. Hadamard rotations can be implemented using the same principle, and thus have similar complexity and system-architecture requirements as FFT, which has found applications in, e.g., coherent optical orthogonal frequency-division multiplexed (OFDM) transmission \cite{Shieh:08_2} and CD compensation \cite{Xu:10}. Another characteristic is the potentially simple structure of the transmitted 2D projections after the Hadamard rotation has taken place. As an example, when the constellation points in $\mathcal X$ comprise a subset of a scaled and translated square lattice \cite[Ch.~4]{conway2013sphere}, such as in standard QAM formats, the 2D projections of the rotated signals are also subsets of a scaled and translated square lattice. This puts less stringent resolution requirements on analog-to-digital and digital-to-analog converters compared with arbitrary rotations.

\subsection{Optimization Procedure}

The rotation matrix in \eqref{eq:sysmodel} is numerically optimized with respect to BLER for the joint-channel receiver, and with respect to BER, SER, and AIR in the case of the per-channel receiver, where the AIRs are computed according to \cite[Eq.~(36)]{Alvarado:18}.
Note that $\Nch=2$ is considered and thus, SER and BLER refer to errors in $\mathcal X$ and $\mathcal X^2$, respectively.
Analytical closed-form solutions to these performance metrics are in general difficult to find, and hence, Monte Carlo simulations are used to estimate them. To ensure a reasonable accuracy in the estimation, the performance evaluation of each rotation is based on transmission of at least $10^6$ symbols. Due to the high time consumption required to evaluate the performance, a surrogate optimization solver based on \cite{Gutmann2001} and implemented in the MATLAB global optimization toolbox is used.

4D rotation matrices have six degrees of freedom, and hence any such matrix can be parameterized as a product of six Givens-rotation \cite[Sec.~5.1]{golub:mtxcomp} matrices, each with a specific angle. Any order in which the Givens rotations are applied can yield an arbitrary rotation $\mathsf R$, given a suitable set of angles $\{\phi_1,\phi_2,\dots,\phi_6\}$. A possible construction is
\begin{equation}
	\mathsf R=\mathsf G^{34}(\phi_{1})\mathsf G^{12}(\phi_{2})\mathsf G^{24}(\phi_{3})\mathsf G^{23}(\phi_{4})\mathsf G^{14}(\phi_{5})\mathsf G^{13}(\phi_{6}),
	\label{eq:rot_decomp}
\end{equation}
where $\mathsf G^{ik}(\phi)$ denotes a Givens rotation of an angle $\phi$ in dimensions $i$ and $k$ for $1\leq i<k\leq 4$. Note that dimensions 1--2 and 3--4 correspond to rotations of a complex point in the first and second channel, respectively. Since $\mathsf G^{12}(\cdot)$ and $\mathsf G^{34}(\cdot)$ are the last rotations in the matrix product forming $\mathsf R$ in \eqref{eq:rot_decomp}, they simply apply phase shifts to the 2D constellation projections that are transmitted in each channel. It is found that they do not affect the transmission performance for the model in \eqref{eq:sysmodel}, and hence it suffices to let $\phi_1=\phi_2=0$ in \eqref{eq:rot_decomp} such that $\mathsf G^{12}(0)=\mathsf G^{34}(0)=\mathsf I$ and $\mathsf R=\mathsf G^{24}(\phi_3)\mathsf G^{23}(\phi)\mathsf G^{14}(\phi_5)\mathsf G^{13}(\phi_6)$, requiring only four angles to optimize.

\begin{figure}[!t]
	\centering
	\includegraphics{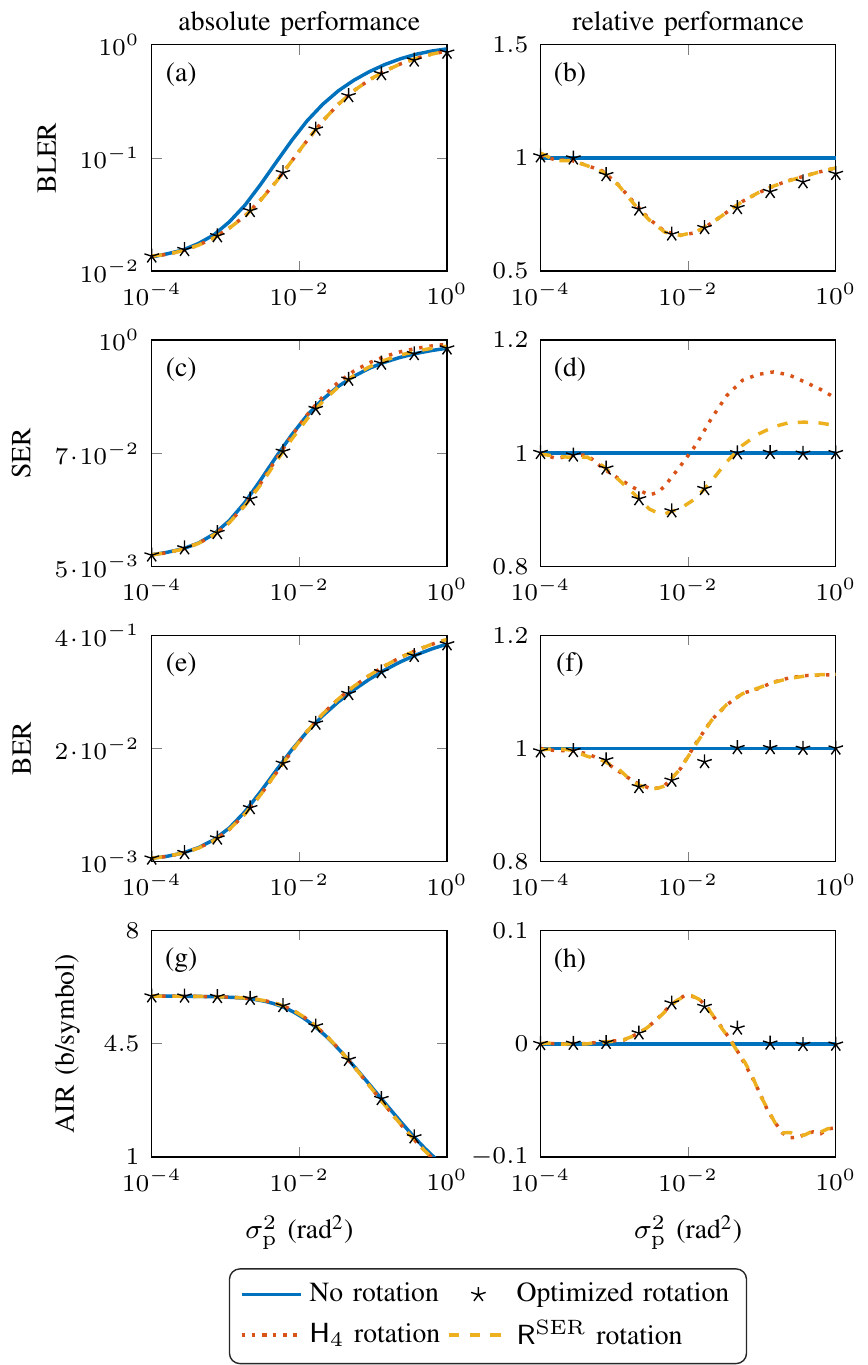}
	\caption{(a)--(b) Absolute and relative BLER performance versus $\rpnvar$ for the joint-channel receiver. (c)--(h)  Absolute and relative performance in terms of BER, SER, and AIR versus $\rpnvar$ for the per-channel receiver.}\vspace{-0.3cm}
	\label{fig:PerformanceVsPNvar}
\end{figure}

\subsection{Results}

Fig.~\ref{fig:PerformanceVsPNvar}(a) shows BLER results for transmission of rotated Gray-coded 64QAM at an SNR of 22.5 dB as a function of the residual phase-noise variance $\rpnvar$, using the joint-channel receiver for detection. Fig.~\ref{fig:PerformanceVsPNvar}(b) shows the same results but relative to the unrotated case in order to emphasize the performance impact of the considered rotations. Note that the absence of rotations corresponds to a signal that has not been processed at the transmitter. As evident from Fig.~\ref{fig:PerformanceVsPNvar}(b), the residual phase noise has a marginal impact on the transmission for $\rpnvar\leq10^{-3}$ rad\textsuperscript2, in which case the performance is invariant under rotations. This is to be expected since signal rotations have no effect on transmission limited only by AWGN \cite{681321}. As the residual phase noise grows stronger, rotations begin to affect the system performance, and $\mathsf H_4$ sees the same performance as the numerically-optimized rotations until $\rpnvar$ exceeds $10^{-1}$ rad\textsuperscript2. Since $\rpnvar\in[10^{-4},10^{-2}]$ rad\textsuperscript2 corresponds to typical amounts of residual phase noise, it can be concluded that when the joint-channel receiver is used, Hadamard rotations are near-optimal in terms of BLER for practical purposes. Moreover, even though the results for $\rpnvar\geq10^{-2}$ do not correspond to practical cases involving residual laser phase noise, they do provide an insight into what occurs in extreme scenarios. In particular, they indicate what may be expected in the presence of nonlinear phase noise, which typically has a higher impact on the transmission performance than laser phase noise.

Figs.~\ref{fig:PerformanceVsPNvar}(c)--(h) show analogous results for the per-channel receiver. The relative AIR is defined as the AIR of rotated-signal transmission subtracted by the AIR of unrotated-signal transmission. When $\rpnvar\approx10^{-4}$ rad\textsuperscript2, rotations have negligible effect on the performance as the AWGN is the main limitation. For $\rpnvar\in[10^{-4},10^{-2}]$ rad\textsuperscript2, $\mathsf H_4$ attains the same performance as the numerically-optimized rotations in terms of BER and AIR. However, for $\rpnvar>10^{-2}$ rad\textsuperscript2, signal rotations become detrimental to the performance.
To gain intuition for this fact, consider transmission of rotated constellations and the use of the per-channel receiver in Section \ref{sec:pc_rec}. After derotation of $\rndvec{r}$, the residual phase noise in each channel manifests as signal attenuation and interference from other channels\footnote{This is detailed for Hadamard rotations in Appendix \ref{sec:hadanalysis}, but similar results can also be found for OFDM transmission in the presence of phase noise \cite{948268}.}. For certain values of $\rpnvar$, this becomes more harmful than pure residual phase noise when the per-channel receiver is used.

\begin{figure}[!t]
	\centering
	\includegraphics{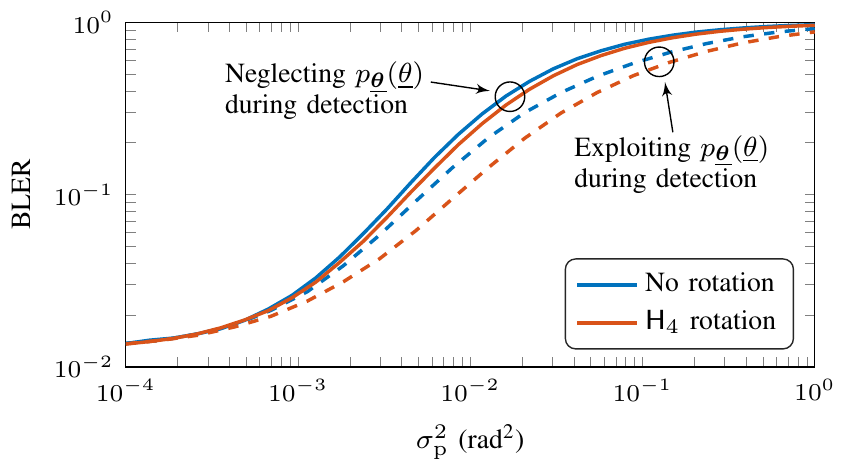}
	\vspace{-0.3cm}
	\caption{BLER versus $\sigma_\mathrm{p}^2$, quantifying the contributions of rotations and $p_{\rndvec{\theta}}(\vec\theta)$ to the detection performance.}\vspace{-0.3cm}
	\label{fig:rot_PNS}
\end{figure}

Interestingly, Figs.~\ref{fig:PerformanceVsPNvar}(c)--(d) show that $\mathsf H_4$ does not yield optimal SER performance, even at moderate amounts of residual phase noise. Instead, it is found that a rotation using
\begin{equation}
	\mathsf R^\mathrm{SER}
	=\frac{1}{\sqrt{2}}
	\begin{bmatrix}
		1 & 1 & 0 & 0\\
		0 & 0 & 1 & 1\\
		1 & -1 & 0 & 0\\
		0 & 0 & -1 & 1
	\end{bmatrix}
\end{equation}
is near-optimal until $\rpnvar$ exceeds $4\cdot10^{-2}$ rad\textsuperscript2, after which rotations become detrimental.
Although $\mathsf R^\mathrm{SER}$ outperforms $\mathsf H_4$ in terms of SER, the two rotations perform identically with respect to BLER, BER, and AIR. This suggests that Gray mapping is suboptimal for the considered system model, but the investigation of suitable bit-to-symbol mappings is deemed out of scope for this paper and is left for future work.

Fig.~\ref{fig:rot_PNS} shows BLER results for transmission of rotated Gray-coded 64QAM at an SNR of 22.5 dB versus $\sigma_\mathrm{p}^2$. The results quantify the performance gains that are achieved in Figs.~\ref{fig:PerformanceVsPNvar}(a)--(b) through effective rotations and symbol detection that utilizes the phase-noise statistics encapsulated in $p_{\rndvec{\theta}}(\vec\theta)$. In particular, the results show that the use of $p_{\rndvec{\theta}}(\vec\theta)$ for detecting unrotated constellations can be more beneficial than detecting rotated constellations and neglecting $p_{\rndvec{\theta}}(\vec\theta)$. However, the former strategy requires the receiver to have knowledge of $p_{\rndvec{\theta}}(\vec\theta)$ when detection takes place. Moreover, these strategies can also be expected to differ in terms of complexity, whose analysis is left for future work.

\begin{figure}[!t]
	\centering
	\includegraphics{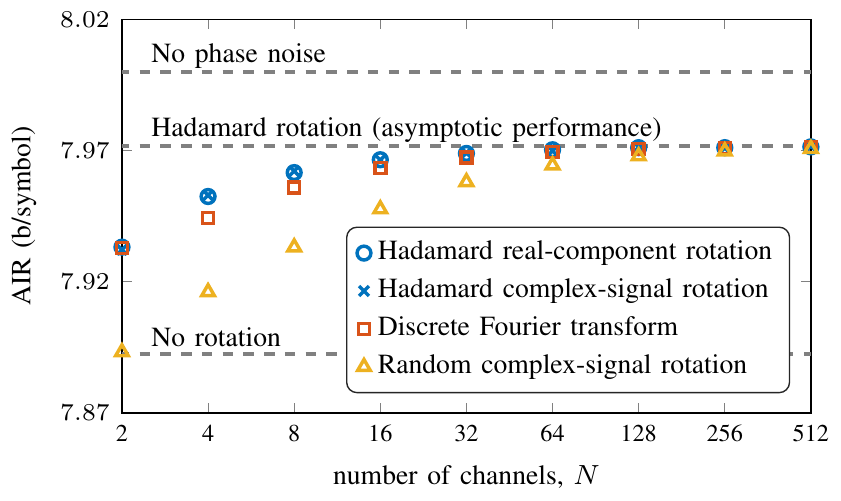}
	\vspace{-0.2cm}
	\caption{AIR for different rotations versus the number of channels for transmission of 256QAM at an SNR of 34 dB with $\rpnvar=10^{-3}$ rad\textsuperscript2.}
	\label{fig:GMIvsNch}\vspace{-0.3cm}
\end{figure}

The results presented so far indicate that Hadamard rotations are near-optimal with respect to BLER, BER, and AIR for practical values of $\rpnvar$ in Gray-mapped 64QAM transmission. The performance improvements are greatest for the joint-channel receiver with up to 35\% decrease in BLER, whereas for the per-channel receiver, performance improvements of up to 6\%, 7\%, and 0.04 b/symbol are observed in SER, BER, and AIR, respectively. The per-channel receiver is arguably more practical due to the lower required complexity and the fact that it does not require knowledge about $p_{\rndvec{\theta}}(\vec\theta)$. Due to the straightforward construction in \eqref{eq:hadamard_construction}, the rest of this paper studies the effects of Hadamard rotations of Gray-coded QAM constellations as the signal dimensionality is increased when the per-channel receiver is used.

\section{Hadamard-Rotation Performance}
\label{sec:hadamard_perf}
In this section, the transmission performance of Hadamard-rotated signals using the per-channel receiver is assessed for different numbers of channels $\Nch$, SNRs, and QAM orders $|\mathcal X|$. It is shown in Appendix \ref{sec:hadanalysis} that the performance of Hadamard rotations on a complex-signal basis, which can be implemented as $\mathsf H_\Nch\rndvec{s}$, is equal to that of Hadamard rotations on a real-component basis, i.e., $f_{\mathsf H_{2\Nch}}(\rndvec{s})$. This is illustrated in Fig.~\ref{fig:GMIvsNch}, where the AIR performance of signals that are Hadamard rotated on a complex-signal and real-component basis is plotted as a function of the number of channels. For comparison, Fig.~\ref{fig:GMIvsNch} also includes the performance of the discrete Fourier transform as well as the average performance of an ensemble of random rotations\footnote{The random rotation matrices are obtained by generating random orthogonal matrices \cite[p.~597]{mezzadri:2007} and retaining those with determinant $+1$.}, carried out on a complex-signal basis. This illustrates the superiority of Hadamard rotations, but as the number of channels grows large, the performance variation between the different rotations becomes small. Furthermore, Fig.~\ref{fig:GMIvsNch} shows the performance for transmission of unrotated signals in addition to the performance for transmission in the absence of phase noise.

The asymptotic behavior of Hadamard rotations in the limit of an infinite number of channels is given by the following proposition.
\begin{prop}
	Consider \eqref{eq:sysmodel} when $N\rightarrow\infty$. Then, the transmission of a Hadamard-rotated signal followed by a corresponding receiver-side derotation yields a signal expressed in each channel as
	\begin{equation}
		\rnd{\tilde r}_{i}=\alpha \rnd s_{i}+\rnd{\tilde n}_{i},
		\label{eq:hadamard_rot_asymp}
	\end{equation}
	for $i=1,\dots,\Nch$, where $\alpha=e^{-\rpnvar/2}$ and $\rnd{\tilde n}_i$ is a complex Gaussian random variable with variance $N_0+E_\mathrm{s}(1-e^{-\rpnvar})$. If $\mathcal X$ is a standard QAM format, $\rnd{\tilde n}_i$ is circularly symmetric.
	\label{prop:asympt}
\end{prop}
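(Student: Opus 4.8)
The plan is to track a single received component through the derotation map and then take the $N\to\infty$ limit, exploiting the $\pm 1/\sqrt{N}$ structure of the Hadamard matrix together with a law-of-large-numbers argument. First I would work on the complex-signal basis, since Appendix~\ref{sec:hadanalysis} guarantees that $f_{\mathsf H_{2N}}$ and $\mathsf H_N$ are performance-equivalent; this lets me write the transmitted vector as $\tilde{\rndvec s}=\mathsf H_N\rndvec s$ and the derotated signal as $\rndvec{\tilde r}=\mathsf H_N^T\rndvec r=\mathsf H_N^T\rndmtx\Theta\mathsf H_N\rndvec s+\mathsf H_N^T\rndvec n$. The noise term $\mathsf H_N^T\rndvec n$ is again i.i.d.\ circularly symmetric complex Gaussian with variance $N_0$ because $\mathsf H_N$ is orthogonal, so it contributes the $N_0$ part of the claimed noise variance; the work is in analyzing $\mathsf M\triangleq\mathsf H_N^T\rndmtx\Theta\mathsf H_N$.

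Next I would compute the entries of $\mathsf M$ explicitly. Writing $(\mathsf H_N)_{ki}=h_{ki}/\sqrt N$ with $h_{ki}\in\{\pm1\}$, the $(i,i)$ diagonal entry is $\frac1N\sum_{k=1}^N h_{ki}^2 e^{j\rnd\theta_k}=\frac1N\sum_{k=1}^N e^{j\rnd\theta_k}$, which is independent of $i$; call it $\rnd\alpha_N$. An off-diagonal entry $(\mathsf M)_{i\ell}$ for $i\ne\ell$ is $\frac1N\sum_{k=1}^N h_{ki}h_{k\ell}e^{j\rnd\theta_k}$, where now $h_{ki}h_{k\ell}\in\{\pm1\}$ takes value $+1$ on exactly half the indices $k$ and $-1$ on the other half, by orthogonality of distinct Hadamard columns. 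Then the $i$-th derotated component is $\rnd{\tilde r}_i=\rnd\alpha_N\rnd s_i+\sum_{\ell\ne i}(\mathsf M)_{i\ell}\rnd s_\ell+(\mathsf H_N^T\rndvec n)_i$, and I must show that, as $N\to\infty$, $\rnd\alpha_N\to\alpha=\mathbb E[e^{j\rnd\theta_1}]=e^{-\rpnvar/2}$ (the characteristic function of a zero-mean Gaussian evaluated at $1$) and that the interference sum converges in distribution to a zero-mean complex Gaussian with the stated variance $E_{\mathrm s}(1-e^{-\rpnvar})$.

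The convergence $\rnd\alpha_N\to e^{-\rpnvar/2}$ is a weak law of large numbers for the i.i.d.\ bounded random variables $e^{j\rnd\theta_k}$, whose common mean is $e^{-\rpnvar/2}$ (real, since $\rnd\theta_k$ is symmetric); this is routine. The interference term is the main obstacle: $\rnd I_i\triangleq\sum_{\ell\ne i}\big(\frac1N\sum_k h_{ki}h_{k\ell}e^{j\rnd\theta_k}\big)\rnd s_\ell$ is a double sum over two independent sources of randomness (the phase noises $\rnd\theta_k$ and the data symbols $\rnd s_\ell$), coupled through the deterministic sign pattern. I would first verify the second-moment bookkeeping: conditioning on $\rndvec\theta$ and using $\mathbb E[\rnd s_\ell]=0$, $\mathbb E[|\rnd s_\ell|^2]=E_{\mathrm s}$, the conditional variance of $\rnd I_i$ is $E_{\mathrm s}\sum_{\ell\ne i}|(\mathsf M)_{i\ell}|^2=E_{\mathrm s}\big(\sum_\ell|(\mathsf M)_{i\ell}|^2-|\rnd\alpha_N|^2\big)=E_{\mathrm s}\big((\mathsf M\mathsf M^\dagger)_{ii}-|\rnd\alpha_N|^2\big)=E_{\mathrm s}(1-|\rnd\alpha_N|^2)$, where the key simplification $(\mathsf M\mathsf M^\dagger)_{ii}=(\mathsf H_N^T\rndmtx\Theta\rndmtx\Theta^\dagger\mathsf H_N)_{ii}=(\mathsf H_N^T\mathsf H_N)_{ii}=1$ uses unitarity of $\rndmtx\Theta$. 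Taking $N\to\infty$ gives conditional variance $\to E_{\mathrm s}(1-e^{-\rpnvar})$, matching the claim. For the Gaussianity and circular symmetry I would argue that $\rnd I_i$ is, conditionally on $\rndvec\theta$, a weighted sum of many independent zero-mean symbols with individually vanishing weights ($|(\mathsf M)_{i\ell}|=O(1/\sqrt N)$ typically), so a Lindeberg-type central limit theorem applies and the conditional law is asymptotically complex Gaussian; since the limiting conditional variance is the deterministic constant $E_{\mathrm s}(1-e^{-\rpnvar})$, the unconditional limit is the same Gaussian, independent of $\rndvec\theta$ and hence of $\rnd s_i$ and the thermal noise. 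Circular symmetry when $\mathcal X$ is standard QAM follows because then $j\rnd s_\ell\stackrel{d}{=}\rnd s_\ell$ (a $90^\circ$ rotation permutes a square QAM constellation), so each summand, and therefore the limit, is invariant under multiplication by $j$; combined with the circularly symmetric thermal noise this makes $\rnd{\tilde n}_i=\rnd I_i+(\mathsf H_N^T\rndvec n)_i$ circularly symmetric with total variance $N_0+E_{\mathrm s}(1-e^{-\rpnvar})$, completing the proof. I expect the delicate point to be making the conditional CLT rigorous uniformly enough that conditioning on $\rndvec\theta$ can be removed cleanly; a dominated-convergence argument on characteristic functions, using $|\mathbb E[e^{j\Re(\bar w\rnd{\tilde r}_i)}\mid\rndvec\theta]|\le 1$, should suffice.
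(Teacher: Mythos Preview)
Your proposal is correct and structurally identical to the paper's proof in Appendix~\ref{sec:hadanalysis}: both work on the complex-signal basis, split $\mathsf H_N^T\rndmtx\Theta\mathsf H_N$ into the common diagonal entry $\rnd\alpha_N$ (handled by the law of large numbers, giving $\alpha=e^{-\rpnvar/2}$ via the Gaussian characteristic function) and an off-diagonal interference term (handled by a central limit theorem), with circular symmetry for QAM treated separately. Your execution differs in two places worth noting. For the interference variance you use the one-line identity $(\mathsf M\mathsf M^\dagger)_{ii}=1$, which follows from unitarity of $\rndmtx\Theta$ and orthogonality of $\mathsf H_N$, whereas the paper computes $\mathbb E[|\rnd t_{i,n}|^2]=\tfrac1N(1-e^{-\rpnvar})$ and the pseudo-variance $\mathbb E[\rnd t_{i,n}^2]$ explicitly from the half-$+1$/half-$-1$ sign structure of $h_{l,i}h_{l,n}$; your route is shorter and avoids that bookkeeping. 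For circular symmetry you invoke the distributional invariance $j\rnd s_\ell\stackrel{d}{=}\rnd s_\ell$ of square QAM, while the paper instead shows the pseudo-variance of $\rnd w_i$ vanishes because $\mathbb E[\rnd s_n^2]=0$; the pseudo-variance criterion is slightly more general (it covers any proper constellation, not only those invariant under $90^\circ$ rotation). Finally, your conditional-CLT-then-uncondition argument is actually a bit more careful than the paper's direct appeal to the CLT, since the summands $\rnd s_n\rnd t_{i,n}$ that the paper sums over share the same phase-noise realizations and are therefore only uncorrelated, not independent.
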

\begin{proof}
	See Appendix \ref{sec:hadanalysis}.
\end{proof}
According to Proposition \ref{prop:asympt}, the system model is effectively transformed into an AWGN channel with deterministic signal attenuation when the number of channels grows large. The performance of \eqref{eq:hadamard_rot_asymp} yields an asymptote that is marked in Fig.~\ref{fig:GMIvsNch}.
It can be seen that in order to benefit from Hadamard rotations of QAM constellations, $N=16$ suffices for practical purposes. An interesting question pertains to whether the asymptote resulting from Proposition \ref{prop:asympt} applies also to other classes of orthogonal and unitary transforms when $N\rightarrow\infty$, but addressing this question is left for future work.

Fig.~\ref{fig:signalHist}
\begin{figure}[!t]
	\centering
	\includegraphics{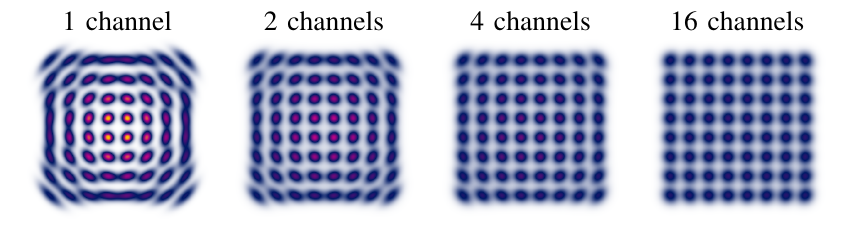}
	\vspace{-0.3cm}
	\caption{The empirical PDF of the received signal in each channel after transmission of Hadamard-rotated 64QAM and receiver-side derotation.}
	\label{fig:signalHist}
\end{figure}
depicts the empirical PDF of a received rotated-64QAM constellation in each channel after derotation at an SNR of 22.5 dB for $\rpnvar=10^{-2}$ rad\textsuperscript2 and different numbers of channels, with 1-channel transmission corresponding to an unrotated signal. As the number of channels increases, the received signal gradually begins resembling the asymptotic case in \eqref{eq:hadamard_rot_asymp}.

\begin{figure}[!t]
	\centering
	\includegraphics{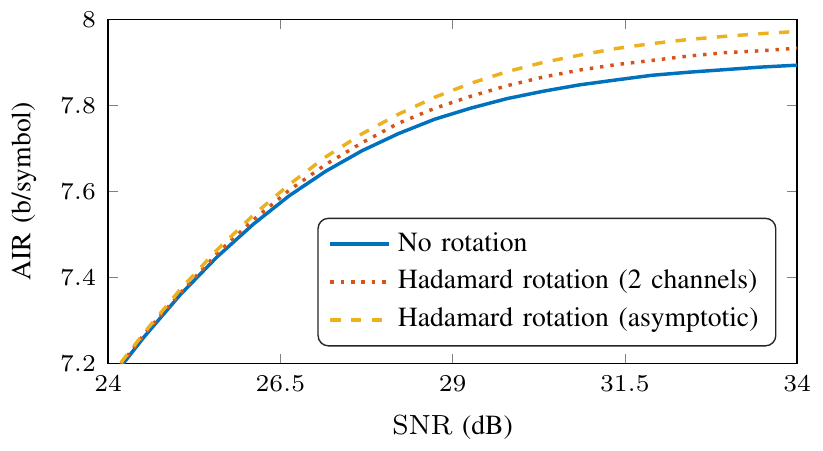}
	\vspace{-0.2cm}
	\caption{AIR versus SNR for transmission of Gray-mapped 256QAM with $\rpnvar=10^{-3}$ rad\textsuperscript2.}
	\label{fig:GMIvsSNR}
\end{figure}

Fig.~\ref{fig:GMIvsSNR} shows the AIR of rotated-256QAM transmission through 2 channels as well as the asymptotic performance as the number of channels grows large, as a function of SNR with $\rpnvar=10^{-3}$ rad\textsuperscript2. As expected, the effect of signal rotations is marginal at lower SNRs since the transmission performance is dominated by AWGN, but as the SNR increases, the impact of rotations becomes significant. At an SNR of 34 dB, 0.04 b/symbol and 0.08 b/symbol improvement in AIR is observed for rotated-256QAM transmission through 2 channels and in the asymptotic case, respectively.

\begin{figure}[!t]
	\centering
	\includegraphics{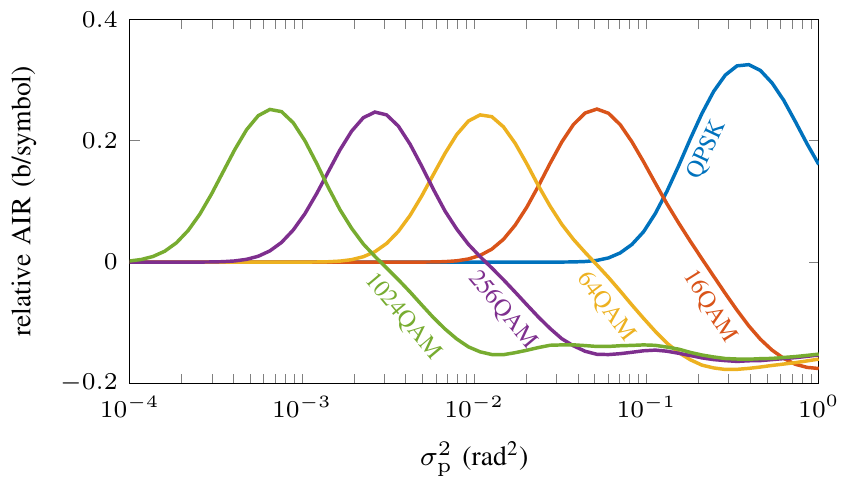}
	\vspace{-0.2cm}
	\caption{The relative AIR for transmission of different Hadamard-rotated QAM constellations versus $\rpnvar$ when the number of channels and SNR grow large.}
	\label{fig:GMIvsPNvar}
\end{figure}

It is clear from the results presented so far that the impact of rotations increases with the number of channels and the SNR. Thus, in order to determine the maximum impact of Hadamard rotations on QAM transmission, Fig.~\ref{fig:GMIvsPNvar} shows the AIR of rotated-QAM transmission relative to the AIR of unrotated-QAM transmission, and versus $\rpnvar$ in the asymptotic case where the number of channels and SNR grow large. Overall, the maximum increase in AIR is found to be approximately 0.33 b/symbol for QPSK and 0.25 b/symbol for higher-order QAM constellations. The point at which the maximum performance increase occurs depends on the QAM order. Therefore, given the fact that practical values of $\rpnvar$ lie in the range $[10^{-4},10^{-2}]$ rad\textsuperscript2, it can be deduced that rotating QAM constellations is most beneficial for 64QAM and higher.

\section{Conclusion}
\label{sec:concl}

This paper investigated the impact of transmitter-side multidimensional signal rotations on multichannel fiber-optic transmission in the presence of residual laser phase noise. The considered system model was based on a previously proposed multichannel phase-noise model that was experimentally verified for SDM transmission through a weakly-coupled MCF. The model entails transmission of arbitrarily rotated QAM constellations in the presence of i.i.d. Gaussian distributed phase noise (see Figs.~\ref{fig:PEdist} and \ref{fig:typicalRPNvar}). Data detection was carried out using the proposed joint-channel or per-channel receivers.

Through numerical optimization for two-channel transmission, Hadamard rotations were found to be near-optimal in terms of BLER, BER, and AIR for practical amounts of residual phase noise (see Fig.~\ref{fig:PerformanceVsPNvar}). In the case of per-channel detection, it was shown that the performance of Hadamard rotations approaches an asymptote as the number of dimensions over which the rotations are performed grows large (see Figs.~\ref{fig:GMIvsNch} and \ref{fig:signalHist}). The performance impact of rotations also depends on the SNR (see Figs.~\ref{fig:GMIvsSNR} and \ref{fig:GMIvsPNvar}).

While transmitter-based processing has already been shown to mitigate impairments such as PDL and MDL, the results in this work show that the tolerance of a multichannel system to practical amounts of laser phase noise can also be improved through joint-channel processing at the transmitter. In particular, these results apply in the case that the laser phase noise is uncorrelated across the channels. The performance improvement is achieved by transmitting rotated multidimensional higher-order QAM constellations (64QAM and higher) at the cost of complexity. In the case of Hadamard rotations, the associated complexity is on the same order as the fast Fourier transform.
Hence, it is clear that Hadamard rotations are interesting for higher-order modulations used for short-to-medium transmission distances. Furthermore, the paper presents several open questions paving the way to future studies on, e.g., effective bit labelings, the effectiveness of rotations for shaped constellations, the performance difference between joint- and per-channel receivers for different modulation formats, and the benefits of rotations in the presence of nonlinearities.

\appendices
\section{Derivation of Joint-Channel Receiver}
\label{app:jointrec}

Let
\begin{equation}
	\mathcal T_\phi(\kappa)\triangleq\frac{1}{2\pi I_0(|\kappa|)}\exp\left(\Re\left\{\kappa e^{-j\phi}\right\}\right)
	\label{eq:def_tik}
\end{equation}
denote a Tikhonov PDF, where $I_0(\cdot)$ is the modified Bessel function of the first kind and zeroth order, $\kappa\in\mathbb C$, and $\phi\in[-\pi,\pi)$,
\begin{equation}
	\mathcal N_x(\mu,\sigma^2)\triangleq\frac{1}{\sqrt{2\pi\sigma^2}}\exp\left(-\frac{1}{2\sigma^2}(x-\mu)^2\right)
	\label{eq:def_gauss}
\end{equation}
denote a real Gaussian PDF, where $x,\mu,\sigma^2\in\mathbb R$, and
\begin{equation}
	\mathcal{CN}\!_z(\mu_z,\sigma^2)\triangleq\frac{1}{\pi\sigma^2}\exp\left(-\frac{1}{\sigma^2}|z-\mu_z|^2\right)
	\label{eq:def_cgauss}
\end{equation}
denote a complex Gaussian PDF, where $z,\mu_z\in\mathbb C$.
The \textit{a posteriori} PMF of $\tilde{\rndvec{s}}$ at $\vec{\tilde s}$ in \eqref{eq:map} can be approximated as
\begin{align}
	P_{\rndvec{\tilde s}|\rndvec{r}}(\vec{\tilde s}|\vec r)&=\int_{\mathbb R^\Nch}p_{\rndvec{\tilde s},\rndvec{\theta}|\rndvec{r}}(\vec{\tilde s},\vec\theta|\vec r)\mathrm{d}\vec\theta\nonumber\\
	&\propto\int_{\mathbb R^\Nch}p_{\rndvec{r}|\rndvec{\tilde s},\rndvec{\theta}}(\vec r|\vec{\tilde s},\vec\theta)p_{\rndvec{\theta}}(\vec\theta)\mathrm{d}\vec\theta\label{eq:appA_1}\\
	&=\prod_{i=1}^\Nch\int_{-\infty}^\infty p_{\rnd r_i|\rnd{\tilde s}_i,\rnd\theta_i}(r_i|\tilde s_i,\theta_i)p_{\rnd\theta_i}(\theta_i)\mathrm d\theta_i\label{eq:appA_2}\\
	&=\prod_{i=1}^\Nch\int_{-\infty}^\infty\mathcal{CN}\!_{r_i}(\tilde s_ie^{j\theta_i},N_0)\mathcal N_{\theta_i}(0,\rpnvar)\mathrm{d}\theta_i\nonumber\\
	&\approx\prod_{i=1}^\Nch\int_{-\pi}^\pi\mathcal{CN}\!_{r_i}(\tilde s_ie^{j\theta_i},N_0)\mathcal{T}_{\theta_i}\left(\frac{1}{\rpnvar}\right)\mathrm{d}\theta_i\label{eq:appA_3}\\
	&\propto\prod_{i=1}^\Nch\exp\left(-\frac{|\tilde s_i|^2}{N_0}\right)\int_{-\pi}^\pi\exp\left(\Re\{\eta_i e^{-j\theta_i}\}\right)d\theta_i\nonumber\\
	&\propto\prod_{i=1}^\Nch\sqrt{2\pi}\exp\left(-\frac{|\tilde s_i|^2}{N_0}\right)I_0\left(\left|\eta_i\right|\right)\label{eq:appA_4}\\
	&\approx\prod_{i=1}^\Nch\frac{1}{\sqrt{|\eta_i|}}\exp\left(\left|\eta_i\right|-\frac{|\tilde s_i|^2}{N_0}\right)\label{eq:mapApprox}
\end{align}
where $\eta_i$ is defined in \eqref{eq:eta_i} and $\propto$ denotes proportionality with respect to $\vec{\tilde s}$. Furthermore, \eqref{eq:appA_1} comes from the Bayes' rule as well as the facts that $\rndvec{\tilde s}$ and $\rndvec{\theta}$ are independent and that $\rndvec{\tilde s}$ is uniformly distributed, \eqref{eq:appA_2} is obtained since the elements in $\rndvec{r}$ and $\rndvec{\theta}$ are mutually independent, \eqref{eq:appA_3} uses $\mathcal N_{\theta_i}(0,\rpnvar)\approx\mathcal T_{\theta_i}(1/\rpnvar)$ for small $\rpnvar$, \eqref{eq:appA_4} comes due to the definitions in \eqref{eq:def_tik} and \eqref{eq:def_cgauss}, and \eqref{eq:mapApprox} uses $I_0(x)\approx e^{x}/\sqrt{2\pi x}$ for large $x>0$.
To circumvent the numerical instability of \eqref{eq:mapApprox}, \eqref{eq:map} can be rewritten as $\hat{\vec s}=\argmax_{\vec s\in\mathcal X^\Nch}\log_e P_{\rndvec{\tilde s}|\rndvec{r}}(\vec{\tilde s}=f_{\mathsf R}(\vec s)|\vec r)$, since the logarithm is a monotonically increasing function. Substituting the logarithm of \eqref{eq:mapApprox} in this expression gives \eqref{eq:log_approx_map}.

\section{Hadamard-Rotation Asymptotic Analysis}
\label{sec:hadanalysis}
As mentioned in Section \ref{sec:rot_opt_results}, applying phase shifts to the 2D constellation projections does not affect the performance results. Hence, Hadamard-rotation matrices can be modified without performance loss such that the Hadamard rotation is effectively applied on a complex-signal basis as opposed to real-component basis. Assuming $N$ to be a power of two, this modification is performed as
\begin{equation}
	\left[\prod_{i=1}^\Nch \mathsf G^{(2i-1)(2i)}\left(\frac{\pi}{4}\right)\right]\mathsf H_{2\Nch}=\mathsf H_\Nch\otimes \mathsf I_2,
	\label{eq:H_mod}
\end{equation}
where the equality in \eqref{eq:H_mod} can be verified by direct calculation.
Thus, instead of carrying out $f_{\mathsf H_{2\Nch}}(\rndvec{s})$, the same performance is achieved through $f_{\mathsf H_{\Nch}\otimes \mathsf I_2}(\rndvec{s})$, which can also be calculated more efficiently as $\mathsf H_\Nch\rndvec{s}$.

Consider transmission of $\rndvec{\tilde s}=f_{\mathsf H_{\Nch}\otimes \mathsf I_2}(\rndvec{s})=\mathsf H_\Nch\rndvec{s}$, in which case \eqref{eq:sysmodel} gives $\rndvec{r}=\rndmtx{\Theta}\mathsf H_\Nch\rndvec{s}+\rndvec{n}$. The derotation of $\rndvec{r}$ yields
\begin{equation}
	\rndvec{\tilde r}=\mathsf H_\Nch^T\rndvec{r}=\mathsf H_\Nch^T\rndmtx{\Theta}\mathsf H_\Nch\rndvec{s}+\mathsf H_\Nch^T\rndvec{n},\label{eq:derot_allch}
\end{equation}
where \eqref{eq:derot_allch} can in each channel be written as
\begin{align}
	\rnd{\tilde r}_i&=\sum_{n=1}^\Nch\sum_{l=1}^\Nch \rnd{s}_nh_{l,i}h_{l,n}e^{j\rnd{\theta}_l}+\sum_{k=1}^\Nch h_{k,i}\rnd{n}_{k}\\
	&=\rnd{s}_i\frac{1}{\Nch}\sum_{l=1}^\Nch e^{j\rnd{\theta}_{l}}+\sum_{n\neq i}\sum_{l=1}^\Nch \rnd{s}_nh_{l,i}h_{l,n}e^{j\rnd{\theta}_l}+\rnd{n}'_i\\
	&=\rnd\alpha_\Nch \rnd{s}_i+\rnd{w}_i+\rnd{n}_i',
\end{align}
for $i=1,\dots,\Nch$. Moreover, $h_{i,j}$ is the $(i,j)$th entry of $\mathsf H_\Nch$, $\rnd\alpha_\Nch$ is the sample average of $e^{j\rnd{\theta}_1},\dots,e^{j\rnd{\theta}_\Nch}$, $\rnd{n}'_i$ has the same distribution as $\rnd{n}_i$, and $\rnd{w}_i$ is interchannel interference that is regarded as additive noise by the per-channel receiver.

The law of large numbers gives $\alpha\triangleq\lim_{\Nch\rightarrow\infty}\rnd\alpha_\Nch=\mathbb{E}[e^{j\rnd{\theta}}]$ where $\rnd\theta$ is a zero-mean Gaussian random variable with variance $\rpnvar$. Moreover, $\mathbb{E}[e^{j\rnd{\theta}}]$ is a special case of the characteristic function of a real Gaussian random variable \cite[Sec.~4.7]{miller:probability}, and hence, $\alpha=e^{-\rpnvar/2}$. Furthermore, $\rnd w_i$ can be rewritten as $\sum_{n\neq i}\rnd s_n\rnd t_{i,n}$, where $\rnd t_{i,n}=\sum_{l=1}^\Nch h_{l,i}h_{l,n}e^{j\rnd\theta_l}$.
Since the columns of Hadamard-rotation matrices are mutually orthogonal, the set $\{h_{1,i}h_{1,n},\dots,h_{\Nch,i}h_{\Nch,n}\}$ has an equal number of $1/\Nch$ and $-1/\Nch$ elements for $n\neq i$. Thus, $\rnd t_{i,n}$ is a sum of $\Nch/2$ i.i.d. random variables with the same distribution as $(e^{j\rnd\theta^+}-e^{j\rnd\theta^-})/\Nch$, where  $\rnd\theta^+$ and $\rnd\theta^-$ are independent zero-mean Gaussian random variables with variance $\rpnvar$. Hence,
\begin{equation}
	\mathbb{E}\left[\rnd t_{i,n}\right]=\frac{\Nch/2}{\Nch}\left(\mathbb{E}\left[e^{j\rnd\theta^+}\right]-\mathbb{E}\left[e^{j\rnd\theta^-}\right]\right)=0,
\end{equation}
and the variance and pseudo-variance \cite[Sec.~7.8]{gallager:digcomm} of $\rnd t_{i,n}$ are
\begin{align}
	\mathbb{E}\left[|\rnd t_{i,n}|^2\right]&=\frac{\Nch/2}{\Nch^2}\left(\mathbb{E}\left[|e^{j\rnd\theta^+}|^2\right]-\mathbb{E}\left[e^{j\rnd\theta^+}\right]\mathbb{E}\left[e^{-j\rnd\theta^-}\right]\right.\nonumber\\
	&\hspace{2.2cm}\left.-\mathbb{E}\left[e^{-j\rnd\theta^+}\right]\mathbb{E}\left[e^{j\rnd\theta^-}\right]+\mathbb{E}\left[|e^{j\rnd\theta^-}|^2\right]\right)\nonumber\\
	&=\frac{1}{\Nch}\left(1-e^{-\rpnvar}\right),\label{eq:varZk}\\
	\mathbb{E}\left[\rnd t_n^2\right]&=\frac{\Nch/2}{\Nch^2}\left(\!\mathbb{E}\left[e^{j2\rnd\theta^+}\right]\!-\!2\mathbb{E}\left[e^{j\rnd\theta^+}\right]\mathbb{E}\left[e^{\rnd\theta^-}\right]\!+\!\mathbb{E}\left[e^{j2\rnd\theta^-}\right]\!\right)\nonumber\\
	&=\frac{1}{\Nch}\left(e^{-2\rpnvar}-e^{-\rpnvar}\right)\label{eq:pvarZk}.
\end{align}

Finally, $\mathbb{E}[\rnd w_i]=0$, and the variance of $\rnd w_i$ can thus be expressed as
\begin{equation}
	\mathbb{E}\left[|\rnd w_i|^2\right]=\sum_{n\neq i}\mathbb{E}\left[|\rnd  s_n|^2\right]\mathbb{E}\left[|\rnd  t_{i,n}|^2\right]=\frac{\Nch-1}{\Nch}E_\mathrm{s}\left(1-e^{-\rpnvar}\right).
\end{equation}
Therefore, due to the central limit theorem, $\rnd w_i$ becomes a complex Gaussian random variable with variance $E_\mathrm{s}(1-e^{-\rpnvar})$ when $N\rightarrow\infty$. Furthermore, if $\mathbb{E}[\rnd s_n^2]=0$ for all $n\neq i$, which is the case when standard QAM formats are transmitted in all channels outside the $i$th channel, the pseudo-variance of $\rnd w_i$ will be
\begin{equation}
	\mathbb{E}\left[\rnd w_i^2\right]=\sum_{n\neq i}\mathbb{E}\left[\rnd s_n^2\right]\mathbb{E}\left[\rnd t_{i,n}^2\right]=0,
\end{equation}
i.e., $\rnd w_i$ will be a circularly symmetric complex Gaussian variable, which leads to \eqref{eq:hadamard_rot_asymp}. On the other hand, if $\mathbb{E}[\rnd s_n^2]\neq0$ for some $n$, e.g., when a PAM format is transmitted in some channel outside the $i$th channel, then $\mathbb{E}[\rnd w_i^2]\neq0$, which implies that $\rnd w_i$ and $\rnd{\tilde r}_i$ will not be circularly symmetric.


\end{document}